\documentclass{article}
\usepackage{spconf,amsmath,graphicx}

\usepackage{cite}
\usepackage{amsmath,amssymb,amsfonts}

\usepackage{algorithmic}
\usepackage{graphicx}
\usepackage{textcomp}
\usepackage{xcolor}
\usepackage{tcolorbox}
\usepackage{booktabs} 
\usepackage{amsthm} 
\usepackage{makecell, cellspace}
\usepackage{caption}
\usepackage{physics}
\usepackage{multirow}
\captionsetup[table]{name=Table, labelformat=simple, labelsep=colon}

\newtheorem{theorem}{Theorem}[section] 
\newtheorem{proposition}[theorem]{Proposition} 

\def\BibTeX{{\rm B\kern-.05em{\sc i\kern-.025em b}\kern-.08em
    T\kern-.1667em\lower.7ex\hbox{E}\kern-.125emX}}

\newcommand{\bt}[1]{\mbox{$\bf #1$}}

\newcommand{\img}{\bt x}
\newcommand{\cimg}{\hat{\bt x}}

\newcommand{\nummb}{n_b}

\DeclareMathOperator*{\argmin}{arg\,min}
\DeclareMathOperator*{\expec}{\mathsf{E}}
\def\thetavec{\pmb{\theta}}

\title{Rate-Distortion Optimization with Non-Reference \\ Metrics for UGC compression 
}
\name{\hspace{-1.5em}Samuel Fernández-Menduiña$^*$, Xin Xiong$^*$, Eduardo Pavez$^*$,  Antonio Ortega$^*$, Neil Birkbeck$^\dagger$, Balu Adsumilli$^\dagger$ \thanks{This work was funded in part by a gift from YouTube.}}
\address{$^*$University of Southern California, Los Angeles, CA, USA\\ $^\dagger$Google Inc, Mountain View, CA, USA}

\begin{document}
\ninept
\maketitle

\begin{abstract}
Service providers must encode a large volume of noisy videos to meet the demand for user-generated content (UGC) in online video-sharing platforms.
However, low-quality UGC challenges conventional codecs based on rate-distortion optimization (RDO) with full-reference metrics (FRMs). 
While effective for pristine videos, FRMs drive codecs to preserve artifacts when the input is degraded, resulting in suboptimal compression. A more suitable approach used to assess UGC quality is based on non-reference metrics (NRMs). However, RDO with NRMs as a measure of distortion requires an iterative workflow of encoding, decoding, and metric evaluation, which is computationally impractical. 
This paper overcomes this limitation by linearizing the NRM around the uncompressed video. The resulting cost function enables block-wise bit allocation in the transform domain by estimating the alignment of the quantization error with the gradient of the NRM. 
To avoid large deviations from the input, we add sum of squared errors (SSE) regularization. We derive expressions for both the SSE regularization parameter and the Lagrangian, akin to the relationship used for SSE-RDO. Experiments with images and videos show bitrate savings of more than 30\% over SSE-RDO using the target NRM, with no decoder complexity overhead and minimal encoder complexity increase.
\end{abstract}
\begin{keywords}
RDO, non-reference quality assessment, gradient, video compression, user generated content, UGC
\end{keywords}

\section{Introduction}
Non-professional video, also known as user-generated content (UGC), plays a central role in platforms like YouTube and TikTok \cite{wang2019youtube}. 
UGC is often noisy due to amateur production, subpar equipment, and prior compression. Once uploaded, service providers usually re-encode these videos at different qualities and resolutions for streaming \cite{seufert2014survey}. 
However, low-quality UGC challenges conventional codecs based on rate-distortion optimization (RDO) \cite{sullivan_rate_1998} with full-reference metrics (FRMs) like the sum of squared errors (SSE). 
FRMs measure similarity to the input, converging to perfect quality as the compressed content becomes an identical copy of the original video. While this behavior is suitable for the compression of pristine content, it drives the codec to preserve artifacts in noisy UGC, leading to suboptimal compression. 

Non-reference metrics (NRMs), which assess visual quality without a reference, can be used instead of FRMs. NRMs are the preferred approach for evaluating compressed UGC videos \cite{pavez2022compression, wang2019youtube} because they reveal that increasing bitrate might not improve visual quality. 
For instance, Fig.~\ref{fig:quality_sat} shows that FRMs like the SSE and multi-scale structural similarity index measure (MS-SSIM) \cite{wang_multiscale_2003} approach zero distortion–i.e., perfect quality–at high bitrates, whereas NRMs, such as video score feature aggregation (VSFA), plateau at the input quality. 
This behavior shows the \emph{mismatch between UGC input quality and perfect quality} \cite{wang2019youtube}, which FRMs fail to capture.

While popular for UGC quality assessment, NRMs are seldom used for bit allocation. Since NRMs are computed on decoded videos, parameter selection with NRMs would require \emph{encoding, decoding, and metric evaluation for each coding option}, which is computationally impractical. 
Existing methods address this problem indirectly, constraining the coding parameter set by classifying the input based on its statistical properties or perceptual criteria and then applying a compression strategy to each class \cite{ling2020towards, john2020rate, yu2021predicting}. However, these methods require training on large datasets and parameters are preset for a class rather than optimized for each input. 
Alternatively, quality saturation detection methods \cite{pavez2022compression, xiong_rate_2023} identify the coding parameters at which investing more bits no longer improves quality. 
None of these techniques can help to adaptively select codec parameters block-wise based on the input to optimize a specific NRM.

We address this gap by developing a new approach that allows us to optimize coding for a given NRM via RDO. 
RDO based on SSE is efficient  \cite{li_asymptotic_1999} because 1) it is additive, so global parameter selection simplifies to block-level optimization, and 2) by Parseval's identity, we can compute SSE in the transform domain when orthogonal transforms are used. 
To obtain a cost function based on NRMs with these two properties, we approximate the computation of the NRM of the compressed video using Taylor's expansion (Fig.~\ref{fig:diagram_approx}) via autodiff \cite{paszke_automatic_2017}. 
This \emph{linearized NRM} (LNRM) depends on the gradient of the NRM with respect to the individual pixels in the input video, measuring the alignment of the reconstruction error with the direction of maximum variation of the NRM. 
Since the error in using Taylor's expansion is likely to increase as the rate decreases, we cannot guarantee a good approximation at all rates. 
To mitigate this effect, we add SSE regularization. We derive expressions for the SSE regularization parameter and for the Lagrangian, akin to existing SSE-RDO methods \cite{wiegand_lagrange_2001}. Since we only modify the RDO cost in the encoder, our compression method remains standard-compliant.

\begin{figure*}[t]
    \centering
    \includegraphics[width=\linewidth]{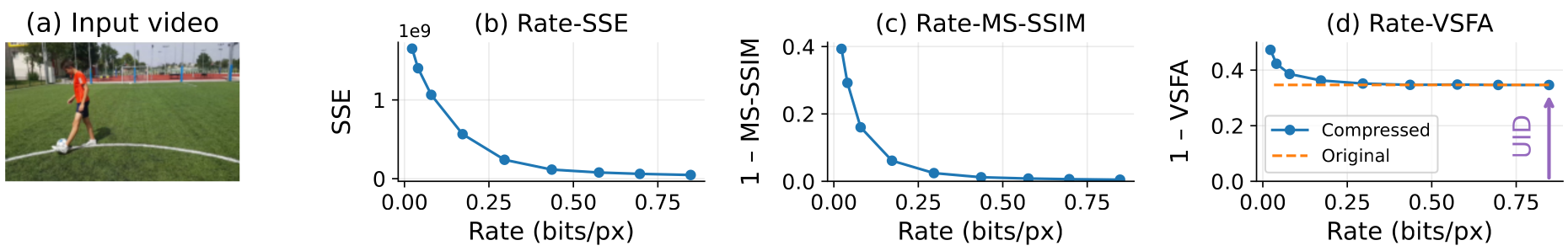}
    \caption{(a) Frame from a UGC video, and (b-d) rate-distortion curves obtained by compressing the input video with AVC. Unlike FRMs like SSE (b) and MS-SSIM (c), which converge to zero distortion as the rate increases, NRMs such as VSFA (d) do not converge to perfect quality for high bitrates, revealing the mismatch between zero distortion and the UGC input distortion (UID).}
    \label{fig:quality_sat}
\end{figure*}

\begin{figure}
    \centering
    \includegraphics[width=0.98\linewidth]{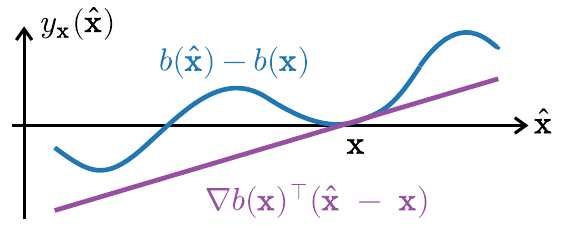}
    \caption{Gap $y_{\img}(\cimg)$ between the NRMs before and after compression, $b(\cimg) - b(\img)$, and the corresponding LNRM.}
    \label{fig:diagram_approx}
\end{figure}

We apply LNRM-RDO to AVC \cite{wiegand_overview_2003} with both images and videos using corresponding non-reference image \cite{mittal2012no, agnolucci2024arniqa} and video quality assessment methods \cite{li2019quality}. 
We test videos from the YouTube UGC dataset \cite{wang2019youtube}, setting the RDO to optimize different NRMs. Moreover, since our method can also be applied to pristine images, we consider pristine content from the KODAK dataset \cite{kodak1993kodak}. We also explore the trade-off introduced by the SSE regularizer. Results in Sec.~\ref{sec:exper} show that LNRM-RDO achieves bitrate savings of over $30\%$ when optimizing NRMs compared to conventional SSE-RDO, with no decoder complexity overhead and minimal encoder complexity increase. Given that we only modify the RDO, our approach can also be used for codecs with more flexible bit allocation schemes and more RDO options such as VVC \cite{bross2021overview} and AV1 \cite{han2021technical}.

\textbf{Notation.} Uppercase and lowercase bold letters, such as $\bt A$ and $\bt a$, denote matrices and vectors, respectively.
The $n$th entry of the vector $\bt a$ is $a_n$. Regular letters denote scalar values.

\section{Preliminaries}
\subsection{Rate-distortion optimization} 
Let $\img$ be the input image with $n_p$ pixels and $\cimg(\thetavec)$ its compressed version using parameters $\thetavec=[\theta_1 \ \theta_2 \ \hdots \theta_{n_b}]\in\Theta$, where $\Theta\in\mathbb{N}^{n_b}$ is the set of all possible operating points and $n_b$ the number of blocks in the image. Assume every entry of $\thetavec$ takes values in the set $\lbrace 1, \hdots, n_r\rbrace$, where $n_r$ denotes the number of RDO options. Given $\img_i$ for $i = 1, \hdots, \nummb$, we aim to find \cite{everett_generalized_1963}:
\begin{equation}
    \thetavec^\star = \argmin_{\thetavec \in \Theta} \, d(\img, \cimg(\thetavec)) + \lambda \, \sum_{i = 1}^{n_b}\, r_i(\cimg_i(\thetavec)),
\end{equation}
where $d(\cdot, \cdot)$ is the distortion metric, $r_i(\cdot)$ is the rate for the $i$th block, and $\lambda\geq 0$ is the Lagrange multiplier that controls the rate-distortion trade-off. We are especially interested in distortion metrics that are obtained as the sum of block-wise distortions,
\begin{equation}
\label{eq:local_gen}
d(\img_1, \hdots, \img_{\nummb}, \cimg_1(\thetavec), \hdots, \cimg_{\nummb}(\thetavec)) = \sum_{i = 1}^{\nummb} \, d_i(\img_i, \cimg_i(\thetavec)),
\end{equation}
which is true for SSE but may not hold for other metrics. Assuming that each block can be optimized independently \cite{ortega_rate-distortion_1998, sullivan_rate_1998}, we obtain $\cimg_i(\thetavec) = \cimg_i(\theta_i)$, which leads to 
\begin{equation}
\label{eq:final_form}
  \theta_i^\star = \argmin_{\theta_i \in \Theta_i} \, d_i(\img_i,  \cimg_i(\theta_i)) + \lambda \, r_{i}(\cimg_i(\theta_i)), \quad i = 1, \hdots, \nummb,  
\end{equation}
where $\Theta_i$ is the set of all possible parameters for the $i$th block. This is the RDO formulation most video codecs solve \cite{fernandez2024feature}. Given a quality parameter $\mathrm{QP}$, a practical way to control the rate-distortion trade-off is by setting \cite{wiegand_lagrange_2001}:
\begin{equation}
\label{eq:og_multiplier}
\lambda = c \, 2^{(\mathrm{QP}-12) / 3},
\end{equation}
where $c$ varies with the type of frame and content \cite{ringis_disparity_2023}. This work aims to replicate this block-level RDO formulation using distortion metrics derived from non-reference metrics.

\begin{figure*}[t]
    \centering
    \includegraphics[width=0.98\linewidth]{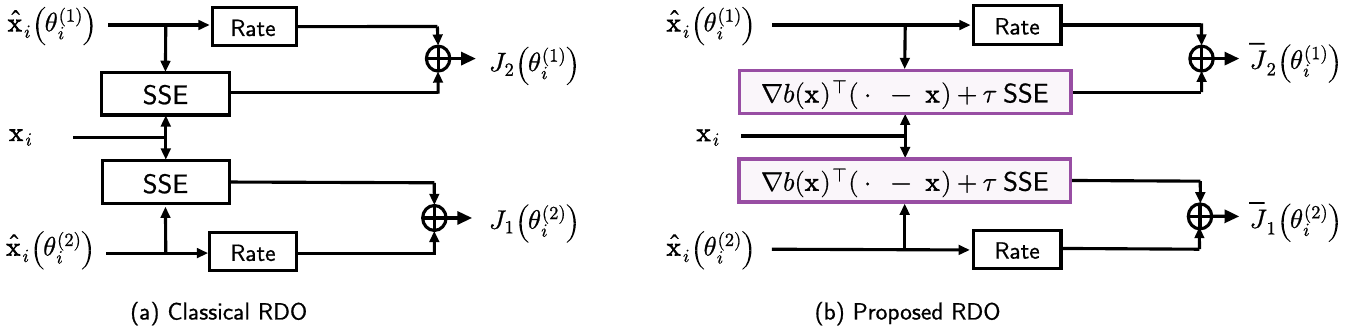}
    \caption{RDO with two options, $\theta_i^{(1)}$ and $\theta_i^{(2)}$, using (a) classical SSE-RDO and (b) LNRM-RDO. LNRM-RDO computes the distortion using the gradient of the NRM, which encodes information about the quality of the video.}
    \label{fig:rdo_blocks}
\end{figure*}
\subsection{Non-reference metrics}
Non-reference metrics (NRMs) measure image or video quality without relying on a reference, assessing distortions directly from the content, which makes them suitable for UGC applications. Although our techniques apply to any NRM, we focus on popular methods that have been trained on compressed data. For images, we use 1) the blind/referenceless image spatial quality evaluator (BRISQUE) \cite{mittal2012no}, which leverages natural scene statistics to detect unnatural deviations caused by distortions, and 2) leArning distoRtion maNifold for Image Quality Assessment (ARNIQA) \cite{agnolucci2024arniqa} trained on TID2013 \cite{ponomarenko2015image}, which achieves state-of-the-art results by modeling distortion manifolds via deep learning. For videos, we consider the Video Score Feature Aggregation (VSFA) \cite{li2019quality}. VSFA incorporates spatial and temporal features to model perceptual quality dynamically using a convolutional neural network. For BRISQUE, a lower value means better quality (↓), while for ARNIQA and VSFA, a higher value means better quality (↑).

\section{Towards RDO with Linearized NRMs}
\subsection{RDO with NRMs}
\label{sec:taylor} 
We denote a given NRM as $b(\cdot)$. In general, $b(\cdot)$ will be a computational metric. We assume lower  values mean higher quality. Direct bit allocation to optimize an NRM leads to:
\begin{equation}
    \label{eq:direct-RDO}
    \thetavec^\star = \argmin_{\thetavec\in\Theta}\ b(\cimg(\thetavec)) - b(\img) + \lambda \, \sum_{i = 1}^{n_b}\, r_i(\cimg_i(\thetavec)).
\end{equation}
Unlike SSE, the proposed distortion term $b(\cimg(\thetavec)) - b(\img)$ can be negative, which indicates that the reconstructed image has better quality (as defined by the NRM) than the original input. When $b(\cimg(\thetavec)) \geq b(\img)$, we reward (rather than penalize) that particular RDO option. This aligns with the observation that, since UGC videos are already noisy, the quality of the compressed content may surpass the quality of the input UGC \cite{wang2019youtube}. 
We highlight that resource allocation via Lagrangian relaxation allows for negative cost functions \cite{everett_generalized_1963}. 

Finding the optimal solution to \eqref{eq:direct-RDO} would require obtaining the complete decoded video in the pixel domain to compute the NRM. 
Hence, parameter selection would require encoding the whole video with each possible coding option and assessing the final result, which is impractical. We aim to modify \eqref{eq:direct-RDO} so that the RDO can be made block-wise during encoding.

\subsection{Linearization}
Define the gradient of the NRM with respect to the pixels of the input image as:
\begin{equation}
    \nabla b(\img) \doteq \begin{bmatrix} \displaystyle \frac{\partial b(\img)}{\partial x_1} & \displaystyle \frac{\partial b(\img)}{\partial x_2} & \hdots & \displaystyle \frac{\partial b(\img)}{\partial x_{n_p}}  \end{bmatrix}^\top.
\end{equation}
We linearize $b(\cimg)$ around the uncompressed input (cf.~Fig.~\ref{fig:diagram_approx}):
\begin{equation}
    b(\cimg(\thetavec)) = b(\img) + \grad b(\img)^\top (\cimg(\thetavec) - \img) + o(\norm{\cimg(\thetavec) - \img}_2^2),
\end{equation}
where $o(x)$ converges to zero as fast as $x$. Hence,
\begin{equation}
b(\cimg(\thetavec)) - b(\img) \cong \grad b(\img)^\top (\cimg(\thetavec) - \img),
\end{equation}
where $\cong$ denotes convergence for high bitrates  \cite{linder_high-resolution_1999}. We call this loss linearized NRM (LNRM), which measures the alignment of the reconstruction error with the gradient of the metric, encouraging or penalizing error patterns that improve or decrease the NRM of the video, respectively. Thus, the RDO problem becomes:
\begin{equation}
    \thetavec^\star = \argmin_{\thetavec\in\Theta}\ \grad b(\img)^\top (\cimg(\thetavec) - \img) + \lambda \, \sum_{i=1}^{n_b} r_i(\cimg_i(\thetavec)).
    \label{eq:optimization-1}
\end{equation}
This cost is already amenable to block-level evaluation. Let $\grad b_i(\img)$ be the gradient entries corresponding to the $i$th block. Then,
\begin{equation}
    \label{eq:distortion}
    \grad b(\img)^\top (\cimg(\thetavec) - \img) = \sum_{i = 1}^{n_b} \, \grad b_i(\img)^\top(\cimg_i(\thetavec) - \img_i),
\end{equation}
Assuming $\cimg_i(\thetavec) = \cimg_i(\theta_i)$, we can re-write the RDO problem as
\begin{equation}
\label{eq:pix_domain}
    \theta^\star_i = \argmin_{\theta_i\in\Theta_i} \ \grad b_i(\img)^\top(\cimg_i(\theta_i) - \img_i) + \lambda\, r_i(\cimg_i(\theta_i)),
\end{equation}
for $i = 1, \hdots, n_b$. Applying this cost function introduces some challenges, which we address by using SSE regularization.

\subsection{SSE regularization}
\label{sec:multiplier}
Since Taylor's expansion is more accurate near the input, LNRM is most reliable in the high bit-rate regime. However, as bit-rate decreases, relying completely on the direction of the gradient may lead to spurious results \cite{liu2024defense}. Moreover, the derivation of the Lagrange multiplier in \eqref{eq:og_multiplier}, which is a staple of modern video codecs \cite{bross2021overview}, is based on the quadratic relationship between the quantization step and the expectation of the SSE under the uniform noise model \cite{sullivan_rate_1998}. However, LNRM yields an expected value of zero under the uniform model, regardless of the quantization level. While we can find  $\lambda$ for LNRM using line search, this process is time-consuming. 
We address these problems by adding a constraint to \eqref{eq:optimization-1}: 
\begin{multline}
\thetavec^\star = \argmin_{\thetavec\in\Theta} \ \nabla b(\img)^\top(\cimg(\thetavec) - \img) + \lambda \, \sum_{i = 1}^{n_b}r_i(\cimg_i(\theta_i)), \\
\mathrm{such \  that} \ \norm{\cimg(\thetavec) - \img}_2^2 \leq \mathrm{SSE}_{\mathrm{max}},
\end{multline}
where $\mathrm{SSE}_{\mathrm{max}}$ is the maximum acceptable SSE with respect to the input video.
Using Lagrangian relaxation \cite{everett_generalized_1963} and grouping together terms other than the rate, we define:
\begin{equation}
\label{eq:sse_reg}
    d(\img, \cimg(\thetavec)) \doteq \nabla b(\img)^\top (\cimg(\thetavec) - \img) + \tau \norm{\cimg(\thetavec) - \img}_2^2,
\end{equation}
where $\tau$ is a parameter that can be selected by the user to control the trade-off between SSE and NRM performance. We call this optimization process LNRM-RDO; with some abuse of notation, we will particularize to the specific metric (e.g., BRISQUE-RDO) when needed. We compare LNRM-RDO to SSE-RDO in Fig.~\ref{fig:rdo_blocks}.

\begin{figure*}[t]
    \centering
    \includegraphics[width=0.49\linewidth]{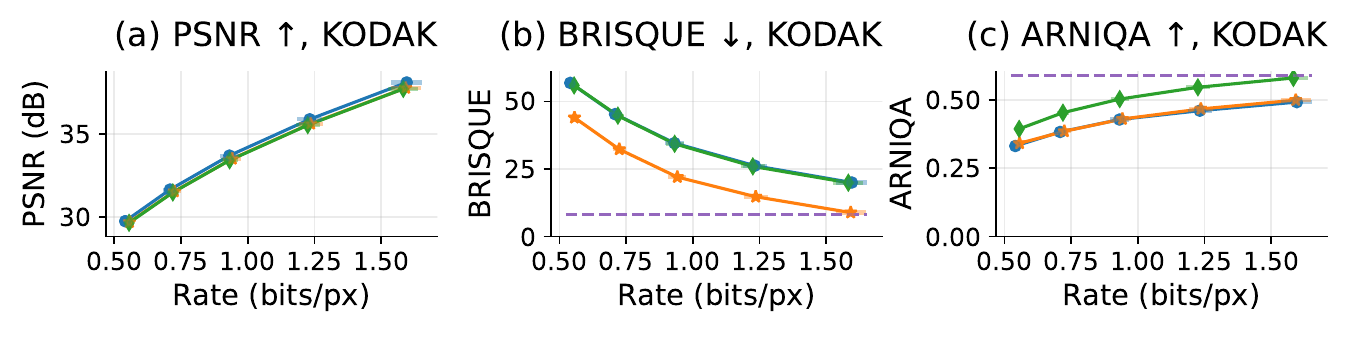}
    \includegraphics[width=0.49\linewidth]{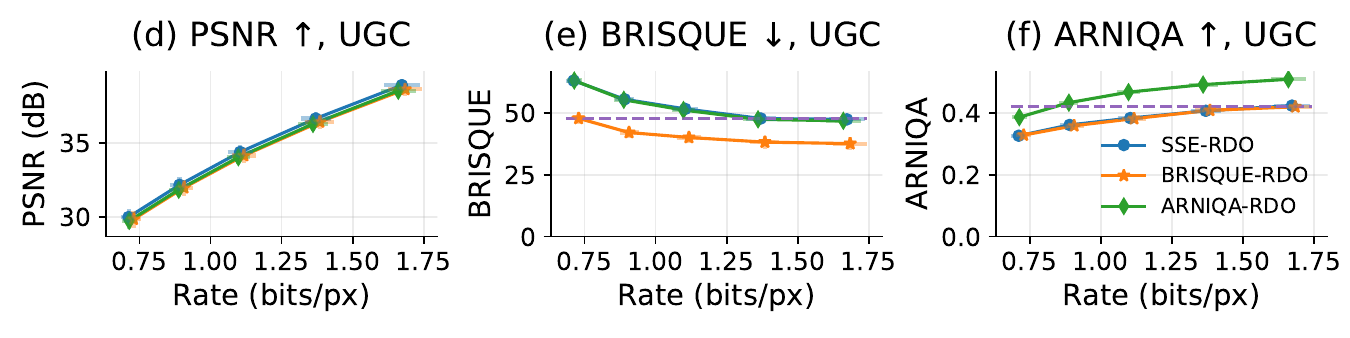}
    \caption{Average RD curves, with standard error for both axes, for KODAK (a-c) and selected frames from the YouTube-UGC dataset (d-f) using different RDO methods. For NRMs, the average value of the input images is shown with a dashed purple line. The best results for each metric are obtained by the RDO targeting that metric. For UGC, the NRM of the compressed image may surpass the NRM of the input \cite{wang2019youtube}.}
    \label{fig:rd_pristine}
    \vspace{-1em}
\end{figure*}

To derive an expression for the Lagrange multiplier $\lambda$ using rate-distortion theory \cite{berger2003rate}, the expected distortion is often modeled as an exponential function of the rate \cite{linder_high-resolution_1999}, which implies that it must be positive. Hence, we must choose $\tau$ so that the expectation of the term in \eqref{eq:sse_reg} is positive. Under the uniform quantization model \cite{gish1968asymptotically}, any $\tau > 0$ satisfies this condition. 

Additionally, since different NRMs may have different scales, the magnitude of the gradient $\nabla b(\img)$ may change based on the metric. Thus, the impact of the LNRM term in the regularized distortion will differ depending on the target NRM. To account for this variability and avoid exhaustive search loops, we aim to choose a different $\tau$ for each LNRM. We split $\tau$ into a normalized factor $\tilde{\tau}$ and a scaling term $\alpha$. Even though in the final optimization the relative importance of SSE and the NRM is controlled by modifying $\alpha$, it is useful to start by selecting a normalized $\tilde{\tau}$ to simplify the selection of $\alpha$. 
As a heuristic, we choose $\tilde{\tau}$ so that, for the quantization errors that yield the maximum possible values of the LNRM and the SSE, i.e., for the worst-case distortion measured by each metric,  both terms in \eqref{eq:sse_reg} have equal contribution to the final loss.
\begin{proposition}
    Under uniform quantization with step $\Delta$,
    \begin{equation}
    \label{eq:tau}
        \tilde{\tau} = 2 / \sqrt{n_p} \, \norm{\nabla b(\img)}_2 / \Delta,
    \end{equation}
    ensures that $\max_{\thetavec} \, \nabla b(\img)^\top (\cimg(\thetavec) - \img) = \tilde{\tau} \max_{\thetavec}  \norm{\cimg(\thetavec) - \img}_2^2$.
\end{proposition}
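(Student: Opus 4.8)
The plan is to evaluate the two maxima appearing in the claimed identity separately under the uniform (scalar) quantization model, equate them, and solve the resulting scalar equation for $\tilde{\tau}$. The one structural fact I will use is that an orthonormal transform preserves energy (Parseval), so $\norm{\cimg(\thetavec) - \img}_2^2$ equals the sum of the squared quantization errors of the $n_p$ transform coefficients, and under a uniform quantizer with step $\Delta$ each of these coefficient errors lies in $[-\Delta/2,\Delta/2]$. Everything then reduces to a constrained‑optimization bookkeeping exercise over the error vector $\bt e \doteq \cimg(\thetavec) - \img$.

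First I would handle the SSE term. Since every one of the $n_p$ coefficient errors has magnitude at most $\Delta/2$, we get $\norm{\bt e}_2^2 = \sum_{j} e_j^2 \le n_p\,\Delta^2/4$, and this worst case is attained by driving every error to $\pm\Delta/2$; hence $\max_{\thetavec}\,\norm{\cimg(\thetavec)-\img}_2^2 = n_p\Delta^2/4$. Next I would handle the LNRM term: by Cauchy–Schwarz, $\nabla b(\img)^\top\bt e \le \norm{\nabla b(\img)}_2\,\norm{\bt e}_2 \le \norm{\nabla b(\img)}_2\,\sqrt{n_p}\,\Delta/2$, where the last step reuses the energy bound from the SSE step; the chain is tight for the error that is simultaneously aligned with $\nabla b(\img)$ and of maximal energy, so the worst‑case LNRM is $\norm{\nabla b(\img)}_2\,\sqrt{n_p}\,\Delta/2$. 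Equating the two worst cases, $\norm{\nabla b(\img)}_2\,\sqrt{n_p}\,\Delta/2 = \tilde{\tau}\, n_p\Delta^2/4$, and solving gives $\tilde{\tau} = 2\,\norm{\nabla b(\img)}_2/(\sqrt{n_p}\,\Delta)$, which is exactly \eqref{eq:tau}.

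The step that requires care — more a modeling choice than a technical obstacle — is the notion of ``worst‑case distortion'' used for the linear LNRM term. Over the exact componentwise box $\{\,|e_j|\le\Delta/2\,\}$, the true maximum of the linear functional $\nabla b(\img)^\top\bt e$ is $(\Delta/2)\,\norm{\nabla b(\img)}_1$, not $(\Delta/2)\sqrt{n_p}\,\norm{\nabla b(\img)}_2$; the proposition implicitly adopts the relaxed $\ell_2$‑ball characterization (error aligned with the gradient, at maximal energy), which is the natural one if we want a single scalar heuristic that cleanly decouples the gradient magnitude from the error energy and matches the $\ell_2$ quantity used for the SSE side. I would make this interpretation explicit at the start of the proof; once it is fixed, the remainder is the two one‑line bounds and the final algebraic rearrangement above.
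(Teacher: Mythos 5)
Your proof is correct and follows essentially the same route as the paper's: bound the linear term via Cauchy--Schwarz assuming the error aligns with the gradient, take the worst-case quantization energy $n_p\Delta^2/4$ for the SSE term, and divide to get $\tilde{\tau} = 2\norm{\nabla b(\img)}_2/(\sqrt{n_p}\,\Delta)$. Your caveat that the tight maximum of the linear functional over the componentwise box is $(\Delta/2)\norm{\nabla b(\img)}_1$ rather than the $\ell_2$-ball value is a fair observation, but the paper adopts the same $\ell_2$ relaxation (``choosing the direction of maximum error''), so the two arguments coincide.
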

\begin{proof}
We aim to find $$\tilde{\tau} = \max_{\thetavec} \, \nabla b(\img)^\top (\cimg(\thetavec) - \img) / \max_{\thetavec}  \norm{\cimg(\thetavec) - \img}_2^2.$$
By Cauchy-Schwarz, the maximum of the NRM happens when the direction of the error aligns with the gradient, $
\nabla b(\img)^\top (\cimg(\thetavec) - \img) \leq  \norm{\nabla b(\img)}_2 \norm{\cimg(\thetavec) - \img}_2$.
Choosing the direction of maximum error, $\tilde{\tau} = \norm{\nabla b(\img)}_2 \max_{\thetavec} \norm{\cimg(\thetavec) - \img}_2^2 / \max_{\thetavec} \norm{\cimg(\thetavec) - \img}_2$.
The maximum quantization error is $n_p\Delta/2$, which yields our result.
\end{proof} 
In our experiments, we set $\tau = \alpha\tilde{\tau}$ and we vary the value of $\alpha$ to explore different trade-offs. We choose the Lagrangian as  
\begin{equation}
    \lambda = \tau c\, 2^{(\mathrm{QP} - 12)/3},
\end{equation}
following from the expectation of the regularized LNRM under the uniform noise model, $\expec\left( d(\img, \cimg)\right) = \tau \expec(\vert \vert \img - \cimg \vert \vert^2_2) = \tau n_p\Delta^2/12$. 

\subsection{Transform domain evaluation}
Let $\bt U$ be an orthogonal transform \cite{strang_discrete_1999}, and define the gradient in the transform domain by $\bt t_i(\img) = \bt U^\top \nabla b_i(\img)$, for $i = 1, \hdots, n_b$. Then, we can re-write \eqref{eq:sse_reg} block-wise in transform domain as
 \begin{equation}
     d(\bt z_i, \hat{\bt z}_i(\theta_i)) = \bt t_i(\img)^\top (\hat{\bt z}_i(\theta_i) - \bt z_i) + \tau \, \norm{\hat{\bt z}_i(\theta_i) - \bt z_i}_2^2.
 \end{equation}
 with $\bt z_i = \bt U^\top \bt x_i$ and $\hat{\bt z}_i(\thetavec)$ being the compressed version of $\bt z_i$ with parameters $\thetavec$.  Therefore, LNRM-RDO can be written as 
 \begin{equation}
 \label{eq:trans_domain}
     \theta_i^\star = \argmin_{\theta_i\in\Theta_i}\, d(\bt z_i, \hat{\bt z}_i(\theta_i))+  \lambda\, r_i(\hat{\bt z}_i(\theta_i)), \quad i = 1, \hdots, n_b.
 \end{equation}
 Since the rate is measured in the transform domain, evaluating this expression is more efficient than computing \eqref{eq:sse_reg}.
\subsection{Complexity analysis}
\label{sec:complexity}
The gradient is computed \textit{only once} regardless of the number of RDO options. We use the Pytorch implementation of autodiff \cite{paszke_automatic_2017}. We need a forward and a backward pass of the NRM. Since the backward pass has roughly twice the complexity of the forward pass \cite{sepehri_hierarchical_2024}, computing the gradient is approximately three times more complex than evaluating the metric. After computing the transform-domain version of the gradient, evaluating the cost function has twice the complexity of computing the SSE. 
For NRMs based on neural networks, the FLOPs per pixel (FLOPs/px) with test images of size $224\times 224$ are 200 kFLOPs/px for ARNIQA \cite{hosu2024uhd}, and 82 kFLOPs/px for VSFA. Our method only changes encoding complexity; a standard decoder can be used with no changes in its complexity. 
\section{Empirical evaluation}
\label{sec:exper}
To guarantee that the compressed image converges to the input as bitrate increases, we use 4:4:4 AVC baseline\footnote{This avoids  potential denoising effects due to chroma subsampling. The same method can be applied to any other configurations such as 4:2:0.}. We compare our results with AVC using SSE-RDO. 
To compress the color channels, we add $3$ to the QP value of the luma channel. We run our experiments on an Intel(R) CPU E5-2667 with a NVIDIA Geforce RTX 3090 (24GB VRAM). For images and I frames, we use RDO to choose block-partitioning (between $4\times 4$ and $16\times 16$) and quantization step (we explore $\Delta \mathrm{QP} = -4, -3, \hdots, 3, 4$ at the block level). For P frames, we use our modified RDO only to choose block partitioning ($16\times 16$, $16\times 8$, $8\times 16$, $8\times 8$, $8\times 4$, $4\times 8$, $4\times 4$). For LNRM-RDO, we add SSE regularization \eqref{eq:sse_reg}. Let $\tilde{\tau}$ be the value in \eqref{eq:tau}; we set $2\tilde{\tau}$, $\tilde{\tau}$, and $\tilde{\tau}/2$. We set $\Delta(\mathrm{QP}) = 2^{(\mathrm{QP} - 4)/6}$ \cite{ma2005study}.

\subsection{Compression experiments}
For images, since our method can also be applied to pristine content, we consider both pristine and UGC inputs. In both cases, we will compare three versions of AVC: SSE-RDO (baseline), BRISQUE-RDO, and ARNIQA-RDO, with $\mathrm{QP}\in\lbrace 25, 28, 31, 34, 37\rbrace$. 

\begin{figure*}
    \centering
    \includegraphics[width=\linewidth]{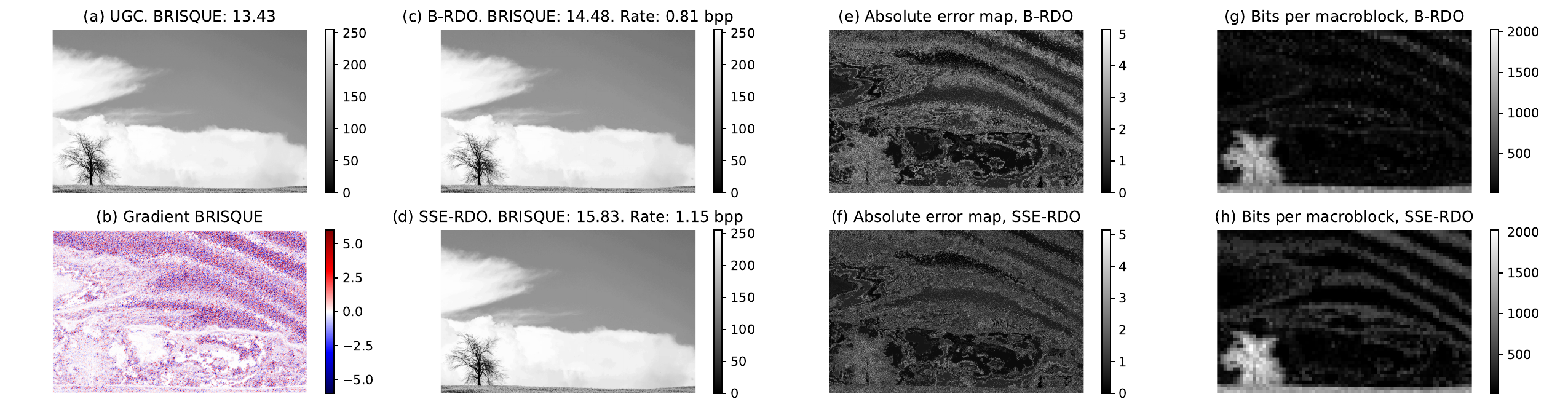}
    \caption{Synthetic UGC (a) and its BRISQUE gradient (b), showing that perturbations in noisy regions, e.g., the banding artifacts in the sky, contribute less to BRISQUE. Thus, we can punish these regions during compression. We compress using BRISQUE-RDO (B-RDO) (c) and SSE-RDO (d). As the error (e-f) and bitrate (g-h) maps reveal, BRISQUE-RDO avoids allocating excessive bitrate to the banding artifacts, reaching a better BRISQUE score than SSE-RDO with less bitrate.}
\label{fig:perceptual}
\vspace{-1em}
\end{figure*}

Regarding UGC images, we consider a set of $85$ frames of resolution $480\times 360$ pixels sampled from the YouTube UGC dataset \cite{wang2019youtube}. We show the BD-rate results in Table~\ref{tab:metrics_results} (top) for different values of $\tau$. We show the average rate-distortion curves using $\tau = \tilde{\tau}/2$ in Fig.~\ref{fig:rd_pristine} (d-f). We observe that using a larger $\tau$ improves the preservation of the original signal as measured by SSE, but comes at the cost of degrading the performance of the target metric. We provide a perceptual example using a synthetic UGC from the CLIC dataset \cite{CLIC2022} with banding artifacts in Fig.~\ref{fig:perceptual}. To encode this image, we considered SSE-RDO and BRISQUE-RDO with $\mathrm{QP}=20$. For BRISQUE-RDO, we consider regularization with $\tau = \tilde{\tau}/2$. Results show that BRISQUE-RDO can account for existing artifacts in the input UGC, reducing the overall bitrate needed to encode the image. 

For pristine content, we consider $24$ images of resolution $512\times768$ from the KODAK dataset \cite{kodak1993kodak}. Average bitrate savings are shown in Table~\ref{tab:metrics_results} (bottom) and we provide the rate-distortion curves for the average of the whole dataset in Fig.~\ref{fig:rd_pristine} (a-c). We observe that the BRISQUE/ARNIQA score of the input for pristine images is better than for UGC. Regarding the compressed image, convergence to the quality of the input image happens at higher bitrate. 

\begingroup
	\begin{table}[t]
		\centering
        \renewcommand{\arraystretch}{0.9}
        \setlength{\tabcolsep}{4pt} 
		\begin{tabular}{llcccc}
			\toprule & \textbf{Method}
			& \textbf{PSNR} [\%]  & \textbf{BRISQUE} [\%] &
   \textbf{ARNIQA} [\%] \\
			\midrule
   
            \multirow{8}{*}{\rotatebox{90}{\textbf{ UGC}}} &
			B-RDO, $ 2\tilde{\tau}$ & ${0.60}$ & $-27.29$ & $0.13$ \\	
			& A-RDO, $ 2\tilde{\tau}$ & $\mathbf{0.53}$ & $0.14$ & $-30.43$ \\		
			\cmidrule{2-5}  
            & B-RDO, $ \tilde{\tau}$ & $1.85$ & $-38.50$ & $0.37$ \\	
			&A-RDO, $ \tilde{\tau}$ & $1.83$ & $-0.89$ & $-41.37$ \\	
			\cmidrule{2-5}
			& B-RDO, $ \tilde{\tau}/2$ & $3.35$ & $\mathbf{-58.94}$ & $0.72$ \\	
			& A-RDO, $ \tilde{\tau}/2$ & $3.54$ & $0.69$ & $\mathbf{-57.26}$ \\
			\midrule
            \midrule
            \multirow{1}{*}{\rotatebox{90}{\textbf{KD \hspace{0.05em}}}}
			& B-RDO, $\tilde{\tau}/2$ & $3.54$ & $-40.21$ & $-0.54$ \\	
			&A-RDO, $ \tilde{\tau}/2$ & $3.72$ & $-1.40$ & $-55.20$ \\    		\bottomrule
    		\end{tabular}		
		\caption{BD-rate saving \cite{bjontegaard_calculation_2001} with respect to SSE-RDO, for images from the KODAK dataset (KD) and the YouTube UGC dataset \cite{wang2019youtube} using BRISQUE-RDO (B-RDO) and ARNIQA-RDO (A-RDO). More negative values are better, the best value is in boldface for a given metric. Each method optimizes the metric they target. More regularization improves SSE but worsens the target metric.}
		\label{tab:metrics_results}
    \end{table}	 
\endgroup
For video sequences, we use AVC in IPP... configuration. We set the GOP size to $30$ and let the quality parameter range in $\mathrm{QP}\in\lbrace 27, 30, 33, 36, 39\rbrace$. We use $85$ sequences of $480\times 360$ pixels from different videos in the YouTube dataset \cite{wang2019youtube}. Average bitrate savings are shown in Table~\ref{tab:video_ugc} for VSFA and BRISQUE, while the average RD curves for the same metrics are shown for $\tau = 2\tilde{\tau}$ in Fig.~\ref{fig:video_exper}. We observe the same trend as for images.

\begin{figure}[t]
    \centering
\includegraphics[width=\linewidth]{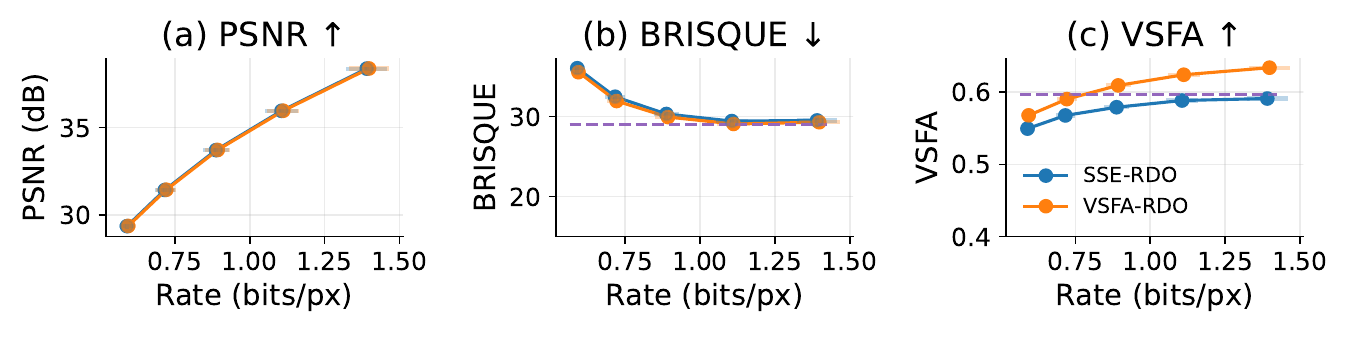}
    \caption{RD curves averaged across GOPs sampled from the YouTube-UGC dataset, using SSE-RDO and VSFA-RDO. We include the standard error for both axes. The average input quality is shown with a purple line. We optimize the target metric. VSFA-RDO can improve the VSFA score of the input UGC.}
    \label{fig:video_exper}
\end{figure}
\subsection{Runtime complexity}
We consider 1) the runtime to compute the gradient of a single frame for a given metric–for BRISQUE, ARNIQA, and VSFA–and 2) runtime increase for compressing the same content using LNRM-RDO and SSE-RDO. We report the average of $30$ images for image metrics and $30$ GOPs for VSFA, all with resolution $480\times 360$ pixels. We summarize our results in Table~\ref{tab:complexity_results}. ARNIQA requires more processing time than VSFA, as predicted in Sec.~\ref{sec:complexity}; although ARNIQA and VSFA entail more operations than BRISQUE, parallel processing in the GPU offsets the complexity. VSFA is compatible with real-time streaming at $60$ frames per second, despite being complex compared to other metrics \cite{wu2022fast}. Introducing the gradient adds a small computational overhead on the encoder runtime.

    \begin{table}[t]
        \centering
\renewcommand{\arraystretch}{0.9}		        
\setlength{\tabcolsep}{3pt} 
        \begin{tabular}{lccc}
            \toprule \textbf{Method}
            & \textbf{PSNR} [\%] & \textbf{VSFA} [\%] & \textbf{BRISQUE} [\%] \\
            \midrule
            VSFA-RDO, $ 2\tilde{\tau}$ & $\mathbf{0.47}$ & $-24.26$ & $\mathbf{-2.71}$ \\		
            VSFA-RDO, $ \tilde{\tau}$ &$1.62$ & $\mathbf{-34.53}$ & $-1.42$  \\	
            \bottomrule
        \end{tabular}
        \caption{BD-rate saving for $85$ video segments from the YouTube UGC dataset with respect to SSE-RDO. More negative is better. We show in boldface the best result for each metric.}
        \label{tab:video_ugc}
    \end{table}	     
    \begin{table}[ht!]
        \centering
\renewcommand{\arraystretch}{0.9}		        
\setlength{\tabcolsep}{5pt} 
        \begin{tabular}{lccc}
            \toprule \textbf{Metric}
            & \textbf{BRISQUE} & \textbf{ARNIQA} & \textbf{VSFA} \\
            \midrule
             Gradient computation &$0.034$ s & $0.025$ s & $0.015$ s  \\	
            Complexity overhead   & $2.12 \, \%$ & $2.09 \, \%$ & $3.51 \, \%$ \\	
            \bottomrule
        \end{tabular}
        \caption{Runtime to compute the gradient for a single frame  (top) and increase in encoder runtime  compared to SSE-RDO (bottom).}
\label{tab:complexity_results}
    \end{table}	     
\section{Conclusion}
In this paper, we tackled the challenge of compressing UGC videos by incorporating a non-reference quality metric (NRM) as the distortion term in rate-distortion optimization (RDO). Leveraging a Taylor expansion, we derived a distortion term that can be efficiently evaluated block-wise in the transform domain while capturing the perceptual information embedded in the NRM. To mitigate large deviations from the input and ensure stability, we introduced SSE regularization, along with explicit expressions for the regularization parameter and the Lagrange multiplier. Experimental results demonstrated substantial gains in the target NRM. Moreover, these improvements were achieved with minimal computational overhead. Moving forward, we aim to integrate the linearized cost function as the primary rate-distortion metric to train learned codecs \cite{balle_nonlinear_2020}. 

\bibliographystyle{IEEEbib}
\bibliography{IEEEabrv,conference_101719}

\end{document}